\tikzstyle{startstop} = [rectangle, rounded corners, minimum width = 0.2cm, minimum height=0.2cm,text centered, draw = black, fill = red!40]
\tikzstyle{io} = [trapezium, rounded corners, trapezium left angle=70, trapezium right angle=110, minimum width=1cm, minimum height=0.5cm, text centered, draw=black, fill = blue!40]
\tikzstyle{process} = [rectangle, rounded corners, minimum width=0.5cm, minimum height=0.4cm, text centered, draw=black, fill = yellow!50]
\tikzstyle{decision} = [diamond, rounded corners, aspect=3.7,text centered, draw=black, align = center, minimum width = 0.5cm, minimum height = 0.4cm,fill = green!30]
\tikzstyle{arrow} = [->,>=stealth,rounded corners]
\newcommand{\ab}{\textit{ab initio}}
\newcommand{\eps}{\varepsilon}
\newcommand{\etal}{\textit{et al.}}
\newcommand{\F}{\mathrm{F}}
\newcommand{\N}{\mathbb{N}}
\newcommand{\R}{\mathbb{R}}
\newcommand{\T}{\top}
\newcommand{\trace}{\mathrm{Tr}}
\newcommand{\lrbrace}[1]{\left\{#1\right\}}
\newcommand{\lrbracket}[1]{\left(#1\right)}
\newcommand{\white}[1]{\textbf{\color{white}{#1}}}
\newcommand{\abs}[1]{\left|#1\right|}
\newcommand{\inner}[1]{\left\langle#1\right\rangle}
\newcommand{\mb}[1]{\mathbf{#1}}
\newcommand{\snorm}[1]{\Vert#1\Vert}
\DeclareMathOperator*{\argmin}{arg\,min}
\newtheorem{thm}{Theorem}
\newtheorem{lem}{Lemma}
\begin{document}
	
	\title{A force-based gradient descent method for \textit{ab initio} atomic structure relaxation}%
	
	\author{Yukuan Hu}%
	\affiliation{State Key Laboratory of Scientific and Engineering Computing, Academy of Mathematics and Systems Science, Chinese Academy of Sciences, Beijing 100190, China}%
	\affiliation{University of Chinese Academy of Sciences, Beijing 100049, China}%
	\author{Xingyu Gao}%
	\email[Corresponding author: ]{gao\_xingyu@iapcm.ac.cn}
	\affiliation{Laboratory of Computational Physics, Institute of Applied Physics and Computational Mathematics, Beijing 100088, China}%
    \author{Yafan Zhao}%
	\affiliation{CAEP Software Center for High Performance Numerical Simulation, Beijing 100088, China}%
	\author{Xin Liu}%
	\email[Corresponding author: ]{liuxin@lsec.cc.ac.cn}
	\affiliation{State Key Laboratory of Scientific and Engineering Computing, Academy of Mathematics and Systems Science, Chinese Academy of Sciences, Beijing 100190, China}%
	\affiliation{University of Chinese Academy of Sciences, Beijing 100049, China}%
	\author{Haifeng Song}%
	\email[Corresponding author: ]{song\_haifeng@iapcm.ac.cn}
	\affiliation{Laboratory of Computational Physics, Institute of Applied Physics and Computational Mathematics, Beijing 100088, China}%
	
	\begin{abstract}
		Force-based algorithms for \ab~atomic structure relaxation, such as conjugate gradient methods, usually get stuck in the line minimization processes along search directions, where expensive \ab~calculations are triggered frequently to test trial positions before locating the next iterate. We present a force-based gradient descent method, WANBB, that circumvents the deficiency. At each iteration, WANBB enters the line minimization process with a trial stepsize capturing the local curvature of the energy surface. The exit is controlled by an unrestrictive criterion that tends to accept early trials. These two ingredients streamline the line minimization process in WANBB. The numerical simulations on nearly 80 systems with good universality demonstrate the considerable compression of WANBB on the cost for the unaccepted trials compared with conjugate gradient methods. We also observe across the board significant and universal speedups as well as the superior robustness of WANBB over several widely used methods. The latter point is theoretically established. The implementation of WANBB is pretty simple, in that no a priori physical knowledge is required and only two parameters are present without tuning.
	\end{abstract}
	
	\maketitle
	
    \par Atomic structure relaxation determines the ground-state atomic configuration by searching for the local minimum in the energy landscape. It makes the foundation and potential performance bottleneck for the search of global structure minimum \cite{oganov2006crystal,wang2010crystal,chen2017sgo} and high-throughput calculations in material design \cite{goedecker2005global,vilhelmsen2012systematic,hu2013pressure,li2013global,zhang2013genetic}. Conjugate gradient methods (CG) \cite{shewchuk1994introduction}, the direct inversion in the iterative subspace (DIIS) \cite{pulay1980diis}, and limited-memory Broyden-Fletcher-Goldfarb-Shanno quasi-Newton methods (LBFGS) \cite{liu1989limited} are three widely used methods to relax the atomic structure following the density functional theory (DFT) calculation. While DIIS and LBFGS could be slightly faster in some cases, CG is more stable in the situations where the initial atomic configuration is far from equilibria. For large systems, hundreds of steps may be required to converge to the typical atomic force tolerance of 0.01 eV/\AA. So it will be tremendously helpful even if the \ab~atomic relaxation could be sped up to some degree.
	
	\par One reason for the slow convergence of the local minimization goes to the narrowly curved energy valley, which prevents the efficient execution of the conventional CG. Thus, various preconditioners have been designed for improving search directions via approximating the Hessian matrices. Zhao \etal~\cite{zhao2006motif} notice that the Hessian matrix of large quantum dots can be approximated by that of motifs comprising merely a few atoms. Some other works construct general force fields based on a priori physical knowledge. Surrogate potentials are used to either directly approximate the Hessian matrix \cite{goedecker2001linear,fernandez2003model,packwood2016universal}, possibly with on-the-flight fittings \cite{chen2014approximate,chen2017curved}, or generate the ``preconditioned force'' with the implicit inversion of the Hessian matrix \cite{liu2018force}. 

    \par Another reason is the inefficient execution of the line minimization (LM) to which not much attention has been paid in the previous study. Actually, gradient descent method and CG are two typical LM-based methods \cite{shewchuk1994introduction,nocedal2006numerical,press2007numerical}. A flowchart of the general LM-based method for the \ab~atomic relaxation is depicted in Fig. \ref{fig:LS flowchart}. 
    \begin{figure}[htbp]
    	\centering
    	\begin{tikzpicture}[node distance=1.2cm]
    		\draw [dashed,rounded corners,fill = gray!30] (-3cm,-4.2cm) rectangle (3.9cm,-9cm);
    		\node [startstop,minimum width = 1.5cm] (start) {Start};
    		\node [io, below of = start, align = center,yshift=0.2cm] (input) {Input initial atomic positions $R_0$,\\force tolerance $\eps>0$; let $k:=0$};
    		\node [decision, below of = input, yshift=-0.5cm] (tolerance) {Force tolerance satisfied?};
    		\node [startstop, right of = tolerance, xshift=2.5cm] (stop) {Stop};
    		\node [process, below of = tolerance] (updatedir) {Update search direction $D_k$};
    		\node [process, below of = updatedir,align = center] (compstep) {Compute \textit{initial trial stepsize} $\alpha_k^{\text{trial}}$;\\let scaling factor $r_k:=1$};
    		\node [process, below of = compstep, align = center,yshift=-0.1cm] (updatefactorpos) {Use an LM algorithm to update $r_k$;\\compute new trial $R_k^{\text{trial}}:=R_k+r_k\alpha_k^{\text{trial}}D_k$};
    		\node [decision, below of = updatefactorpos,yshift=-.6cm] (ls) {\textit{LM criterion} accepts trial?};
    		\node [process, below of = ls,yshift=-0.3cm] (updatepos) {Update atomic positions $R_{k+1}$; let $k:=k+1$};
    		\draw [arrow] (start) -- (input);
    		\draw [arrow] (input) -- node [process,right,rounded corners = 0pt,fill=blue!80,xshift=0.1cm] {\white{KS}} (tolerance);
    		\draw [arrow] (tolerance) -- node [above] {Yes} (stop);
    		\draw [arrow] (tolerance) -- node [right] {No} (updatedir);
    		\draw [arrow] (updatedir) -- (compstep);
    		\draw [arrow] (compstep) -- (updatefactorpos);
    		\draw [arrow] (updatefactorpos) -- (ls);
    		\draw [arrow] (ls) -- node [right] {Yes} (updatepos);
    		\draw [arrow] (updatepos.west) -| ++(-0.5cm,0) - ++ (0,6cm) |- (tolerance.west); 
    		\draw [arrow] (ls.east) -| ++ (0.5cm,0) - ++ (0,1.3cm) |- (updatefactorpos.east);
    		\node [right of = ls, xshift = 2.4cm, yshift = 0.15cm] {No};
    		\node [process, below of = updatefactorpos, xshift = 0.4cm, yshift = .45cm, rounded corners = 0pt, fill = blue!80] {\white{KS}};
    		\node at (3cm,-4.45cm) {\textit{LM process}};
    		\node at (-2.4cm,-3.3cm) (existing) {\textbf{Existing works}};
    		\node at (3.0cm,-3.5cm) (this) {\textbf{This work}};
    		\draw [arrow] (existing.south) -- (updatedir.west);
    		\draw [arrow] (this.south) -- (2.5cm,-4.2cm);
    	\end{tikzpicture}
    	\caption{A flowchart of the general LM-based method, where $N\in\N$ refers to the number of atoms, $R_k\in\R^{3\times N}$ stands for the Cartesian atomic coordinates, $F_k\in\R^{3\times N}$ denotes the atomic forces applied at $R_k$, and $D_k\in\R^{3\times N}$ is the search direction. It should be noted that the white bold text ``\textbf{KS}'' refers to solving the corresponding Kohn-Sham equations to obtain new energies and forces whenever the atomic positions are updated. The LM process is marked out by the shaded box.}
    	\label{fig:LS flowchart}
    \end{figure}
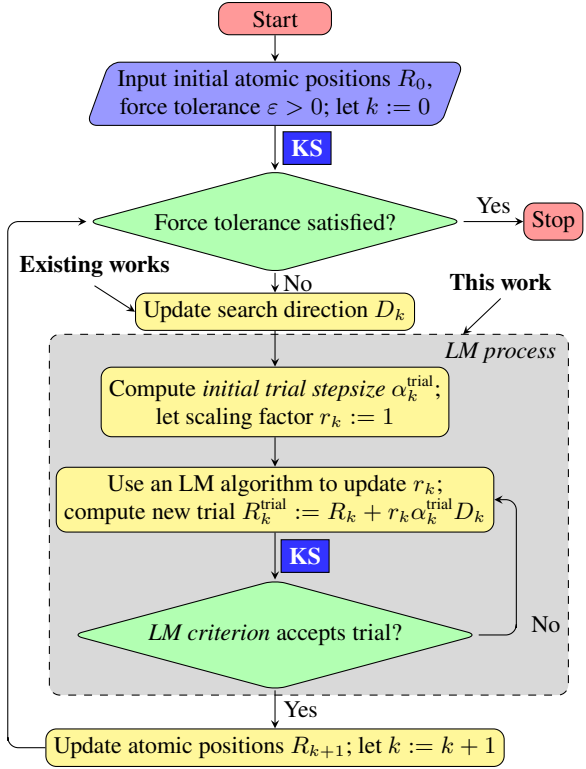 
    The \ab~LM process, marked out by the shaded box in Fig. \ref{fig:LS flowchart}, is carried out along search directions such as steepest descent or CG. During this process, one begins with an initial trial stepsize and then repeatedly invokes an LM algorithm to generate new trials until certain LM criterion is met. Involving moving atoms, each trial step is followed by \ab~calculation, e.g., solving the Kohn-Sham (KS) equation. But only the last accepted trial step triggers the update of search direction. Taking the conventional CG for instance, we test nearly 80 systems and find that each update of search direction is followed by 2.4 calls of LM algorithms on average. In other words, nearly 60\% of the total computational overhead is paid for the unaccepted trial steps in the LM process. Moreover, preconditioning with the (fitted) classical force fields does not necessarily waive the cost in the LM process \cite{chen2017curved}. As a result, the computational cost on the LM process is far more than that on simply determining a search direction if the whole procedure gets frequently trapped in the inner loop.
	
	\par In this work, we present an LM-based gradient descent method, called WANBB, which avoids trapping in the LM process during atomic structure relaxation. We achieve this by (i) calculating an initial trial stepsize that captures the local curvature of potential energy surface (PES) and (ii) devising an unrestrictive LM criterion that tends to accept early trials. Nearly 80 systems have been tested, including organic molecules, metallic systems, semiconductors, surface-molecule adsorption systems, ABX$_3$ perovskites, etc. The set of benchmark tests favors the universality of everyday \ab~atomic relaxation. The unaccepted trial steps in the LM process account for only about 1\% of the total solved KS equations on average in WANBB, compared with nearly 60\% in CG. The considerable compression of the LM process leads to a prominent saving on running time. The average speedup factors of WANBB over CG, DIIS, and LBFGS are about 1.5, 1.2, and 1.2, respectively. The robustness of the atomic relaxation method, i.e., convergence to equilibria regardless of initial configurations, is also investigated. While CG, DIIS, and LBFGS fail on some systems, WANBB manages to converge across the benchmark. This robustness is theoretically established and helpful for the cases with local lattice distortion or reconstruction. Last but not least, WANBB is pretty simple. Free of a priori physical knowledge, it can work with only two parameters present but no tuning required.
	
	\par In what follows, we are ready to deliver the algorithmic development. To facilitate narration, we collect some notations beforehand. We denote by $E(R)\in\R$ and $F(R)\in\R^{3\times N}$, respectively, the potential energy and atomic forces evaluated at $R$. When describing algorithms, we use subscripts for abbreviation; e.g., both $E_k$ and $F_k$ are evaluated at $R_k$. We denote by $\mb{I}$ the identity matrix in $\R^{3\times N}$. The operator $\inner{\cdot,\cdot}$ represents the Frobenius inner product of two $3$-by-$N$ matrices
	$$\inner{M,N}:=\trace(M^\T N),\quad\forall~M,N\in\R^{3\times N},$$
	whereas $\snorm{\cdot}_\F$ yields the Frobenius norm as $\snorm{M}_\F:=\sqrt{\inner{M,M}}$ for any $M\in\R^{3\times N}$. We further denote by $\snorm{M}_{2,\infty}$ the $\ell_{2,\infty}$-norm of matrix $M$, namely, the maximum among the $\ell_2$-norms of columns in $M$. In particular, $\snorm{F}_{2,\infty}$ gives the maximum $\ell_2$-norm among the atomic forces. Throughout iterations, we take the atomic forces $F_k$ as the search direction $D_k$. 
    
    \par The whole procedure shown in Fig. \ref{fig:LS flowchart} can fall into the trap of the LM process owing to two factors. One is the poorly chosen initial trial stepsize, which degrades the starting states of the LM process. The other is the restrictive LM criterion, which acts as a stringent stopping rule for the LM algorithm. The ideal situation is that the procedure always passes the test of the LM criterion with just the original initial stepsize. Intuitively, we require both good initial stepsizes and a loose but safe criterion to achieve this target. Here, ``safe'' means that the convergence property is maintained with the LM criterion.
	
	\par At first sight, a good initial stepsize should grasp, as much as possible, the curvature of PES around the current configuration. As early as 1988, Barzilai and Borwein \cite{barzilai1988two} propose the following renowned Barzilai-Borwein (BB) stepsizes: for any $k\ge1$,
	$$\alpha^{\text{BB1}}_k:=\frac{\inner{S_{k-1},S_{k-1}}}{\inner{S_{k-1},Y_{k-1}}},\quad\alpha^{\text{BB2}}_k:=\frac{\inner{S_{k-1},Y_{k-1}}}{\inner{Y_{k-1},Y_{k-1}}},$$
	where $S_{k-1}:=R_k-R_{k-1}$ and $Y_{k-1}:=F_{k-1}-F_k$ represent the difference of successive atomic positions and forces, respectively. Remarkably, these two stepsizes satisfy 
	\begin{align*}
		\alpha^{\text{BB1}}_k=&\argmin_\alpha\snorm{\alpha^{-1}S_{k-1}-Y_{k-1}}_\F,\\
		\alpha^{\text{BB2}}_k=&\argmin_\alpha\snorm{\alpha Y_{k-1}-S_{k-1}}_\F.
	\end{align*}
    In other words, $\lrbracket{\alpha^{\text{BB1}}_k}^{-1}\mb{I}$ and $\lrbracket{\alpha^{\text{BB2}}_k}^{-1}\mb{I}$ approximate the Hessian matrix around $R_k$ along the last search direction. They capture the local curvature of PES with merely two $3$-by-$N$ matrices. The alternating BB (ABB) stepsize \cite{dai2005projected} is presented as follows
	\begin{equation}
		\alpha^{\text{ABB}}_k:=\left\{\begin{array}{ll}
			\alpha^{\text{BB1}}_k, & \text{if}~\mathrm{mod}(k,2)=1,\\
			\alpha^{\text{BB2}}_k, & \text{if}~\mathrm{mod}(k,2)=0,
		\end{array}\right.
		\label{eqn:abb}
	\end{equation}
	which is shown to overwhelm BB1 and BB2 in many contexts (e.g., \cite{dai2005projected,gao2019parallelizable}). It is not difficult to prove that the implementation of (A)BB stepsizes requires the same computational and storage complexity as computing the conjugate parameter in CG.
	
	\par Despite the simplicity, the (A)BB stepsizes turn out to be significant benefits in solving various optimization problems. The most related application goes to solving KS equations as orthogonality constrained optimization problems, where the (A)BB stepsizes play as important local acceleration components; see, e.g., \cite{gao2019parallelizable,gao2020orthogonalization,gao2021stop,xiao2020class,xiao2021penalty}. However, the application of the (A)BB stepsizes on the atomic structure relaxation problem remains to be investigated. In this work, we take the (A)BB stepsizes as the initial trial stepsizes to exploit their merits. 
	
	\par Without LM, the energy values produced by gradient descent methods with the (A)BB stepsizes do not necessarily decrease monotonically. This phenomenon reminds us of the possible divergence when starting the algorithms far from equilibria. Nevertheless, the monotone LM criterion, adopted by the conventional CG, can ruin the merits of the (A)BB stepsizes, since frequent calls of LM algorithms are entailed to assure the monotonic decrease of the energy and the local curvature of PES may be lost. Therefore, it is more sensible to design a \textit{nonmonotone} LM (NLM) criterion, accepting the original (A)BB stepsizes in most cases. 
	
	\par The existing NLM criteria primarily fall into two categories: the ``max'' type (MNLM) \cite{grippo1986nonmonotone} and the ``average'' type (ANLM) \cite{zhang2004nonmonotone}. In the $(k+1)$-th iteration, the MNLM criterion asks the LM algorithm to return a factor $r_k>0$ such that
	$$E\lrbracket{R_k+r_k\alpha_k^{\text{trial}}F_k}\le\bar E_k-c\cdot r_k\alpha_k^{\text{trial}}\snorm{F_k}_\F^2,$$
	where $\bar E_k:=\max_{i=k-m(k)+1}^{k}\lrbrace{E_i}$, $m(k)\in[0,M]$, both $c$ and $M$ are given positive constants, i.e., there is a sufficient reduction with respect to (w.r.t.) the largest one of the past $m(k)$ energies. The ANLM criterion requires $r_k$ to fulfill
	\begin{align}
		E\lrbracket{R_k+r_k\alpha_k^{\text{trial}}F_k}&\le C_k-c\cdot r_k\alpha_k^{\text{trial}}\snorm{F_k}_\F^2,\label{eqn:sufficient reduction}\\
		\inner{F\lrbracket{R_k+r_k\alpha_k^{\text{trial}}F_k},F_k}&\le\sigma\snorm{F_k}_\F^2,\label{eqn:curvature}
	\end{align}
	where $\sigma\in(0,1)$, the monitoring sequence~$\lrbrace{C_k}$ is updated through
	\begin{equation}
		C_{k+1}:=\frac{\eta_kQ_kC_k+E_{k+1}}{\eta_kQ_k+1},~Q_{k+1}:=\eta_kQ_k+1
		\label{eqn:surrogate sequence}
	\end{equation}
	with $C_0:=E_0$, $Q_0:=1$, and $\eta_k\in[0,1]$. Roughly in the ANLM criterion, a sufficient reduction w.r.t. $C_k$ is obligatory. It is straightforward to deduce that $C_k$ is a weighted average of the past energies $\{E_j\}_{j=0}^{k-1}$ and the current $E_{k}$. The construction of the monitoring sequence is completely free of a priori physical knowledge. Clearly, entailing reduction w.r.t. $\bar E_k$ and $C_k$ respectively instead of $E_k$, both MNLM and ANLM criteria are less restrictive than the monotone criterion adopted by the conventional CG because $\bar E_k$, $C_k\ge E_k$. Moreover, as noted in \cite{zhang2004nonmonotone}, the MNLM criterion is sensitive to the choice of $M$ in some contexts, and is often outperformed by the ANLM criterion. 
	
	\begin{figure}[htbp]
		\centering
		\includegraphics[width=.9\columnwidth]{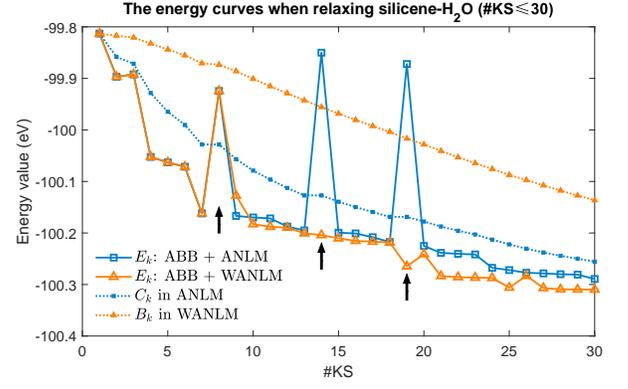}
		\caption{The energy curves when relaxing the silicene surface with one adsorbed H$_2$O molecule using gradient descent method equipped with the ABB stepsizes and ANLM ($\eta_k\equiv0.85$\footnote{This value is recommended by \cite{zhang2004nonmonotone}.}) or WANLM ($\mu_k\equiv0.05$) criterion. ``\#KS'' refers to the number of solved KS equations. We plot out the history with $\text{\#KS}\le30$. The blue solid line with square markers and orange line with triangle markers stand for energies produced by ``ABB + ANLM'' and ``ABB + WANLM'', respectively. The blue dashed line with square markers and orange line with triangle markers represent $\{C_k\}$ in ANLM and $\{B_k\}$ in WANLM, respectively. The black arrows indicate the \ab~calculations dedicated to the trials rejected by the ANLM criterion.}
		\label{fig:wanlm silicene}
	\end{figure}

    \par However, preliminary simulations show that the ANLM criterion is not loose enough for the (A)BB stepsizes during the atomic relaxation. We take the gradient descent direction armed with the ABB stepsizes and ANLM criterion to relax the silicene surface with one adsorbed H$_2$O molecule. From the short relaxation history in Fig. \ref{fig:wanlm silicene}, one can observe that the monitoring sequence $\{C_k\}$ (the blue dashed line with square markers) decreases fast, stays close to the energies $\{E_k\}$, and does not tolerate somewhat large increments. As a result, the combination ``ABB + ANLM'' solves 3 KS equations for the unaccepted trials during the short history, which are marked out by the black arrows in Fig. \ref{fig:wanlm silicene}. For example, the initial trial step at the 8-th iteration gives an energy value of -99.92 eV, which is higher than $C_7=$~-100.03 eV and thus is rejected by the ANLM criterion. 
    
    \par To this end, we devise a reweighted ANLM (WANLM) criterion, which differs from the ANLM criterion in the weights assigned to the past and current energies. Specifically, we denote the monitoring sequence in WANLM by $\lrbrace{B_k}$ and update it via
	\begin{equation}
		B_{k+1}:=\frac{B_k+\mu_kP_kE_{k+1}}{1+\mu_kP_k},~P_{k+1}:=1+\mu_kP_k
		\label{eqn:new surrogate sequence}
	\end{equation}
	with $B_0:=E_0$, $P_0:=1$, and $\mu_k\in[0,1]$. Then WANLM needs $r_k$ to satisfy both Eqs.~\eqref{eqn:sufficient reduction} and \eqref{eqn:curvature} with $C_k$, $Q_k$, and $\eta_k$ replaced by $B_k$, $P_k$, and $\mu_k$, respectively.
	
    \par The looseness of the WANLM criterion can be understood via the comparison with the ANLM criterion. Due to Eq.~\eqref{eqn:surrogate sequence} and $\eta_0\in[0,1]$, the coefficient of $E_1$ is $1/(1+\eta_0)\ge0.5$, indicating a large portion of $E_1$ in $C_1$ and a small gap $C_1-E_1$. Since the monitoring sequence $\{C_k\}$ decreases monotonically (see the appendix for a rigorous proof), it would then stay close to the energy values $\{E_k\}$, rendering the ANLM criterion stringent for the (A)BB stepsizes as reflected in Fig. \ref{fig:wanlm silicene}. In contrast, from Eq.~\eqref{eqn:new surrogate sequence} and $\mu_0\in[0,1]$, the coefficient of $E_1$ is $\mu_0/(1+\mu_0)\le0.5$. Therefore, the monitoring sequence $\{B_k\}$ in the WANLM criterion is relatively far away from the energies $\{E_k\}$ at the beginning and gradually approaches the latter one, leaving sufficient room to accept the (A)BB stepsizes. As shown in Fig. \ref{fig:wanlm silicene}, the first 7 energy values obtained by ``ABB + ANLM'' and ``ABB + WANLM'' are the same. But one can find a much larger distance from $\{E_k\}$ to $\{B_k\}$ than $\{C_k\}$, which enables the WANLM criterion to accept the 8-th initial trial step without invoking extra \ab~calculations and retain the merits of the ABB stepsizes. All the subsequent iterations of ``ABB + WANLM'' benefit from this characteristic and progress to lower energy values faster. Regarding safety, it can be verified that the sufficient reduction in terms of the atomic force norm holds for the monitoring sequence $\{B_k\}$. One can then establish the convergence to equilibria, regardless of initial configurations, after noticing the lower boundedness of $\{B_k\}$. For more details, please refer to the appendix. 
    
    \par The gradient descent method combining the (A)BB stepsizes with the WANLM criterion is referred to as WANBB. We set the initial trial stepsize to be $4.8\times10^{-2}$ \AA$^2$/eV at the first iteration, and take the alternating strategy described in Eq.~\eqref{eqn:abb} afterwards because it leads to better performance than BB1 and BB2 in our simulations. Before invoking an LM algorithm, we take absolute value in case that $\inner{S_k,Y_k}<0$ and make truncation for safeguard:
	\begin{equation}
		\alpha_k^{\text{ABB}}:=\min\lrbrace{\abs{\alpha_k^{\text{ABB}}},\max\lrbrace{-\log_{10}(\Vert F_k\Vert_{2,\infty}),1.0}}.
		\label{eqn:truncation}
	\end{equation}
	As is shown in the appendix, the convergence of WANBB to equilibria can be attributed to both Eqs.~\eqref{eqn:sufficient reduction} and \eqref{eqn:curvature}. For the sake of implementation, it is more efficient to fullfill only Eq.~\eqref{eqn:sufficient reduction}. In WANLM, we set $\mu_k\equiv0.05$, $c=10^{-4}$. The LM algorithm used by WANBB forms quadratic or cubic interpolations based on recent information of PES; for details see \cite[Chapter 3]{nocedal2006numerical}. 
	
    \par We have implemented WANBB in the in-house plane-wave code CESSP \cite{fang2016on,gao2017parallel,zhou2018applicability,fang2019implementation}. The exchange-correlation energy is described by the generalized gradient approximation \cite{perdew1996generalized}. Electron-ion interactions are treated with the projector augmented-wave (PAW) potentials based on the open-source ABINIT Jollet-Torrent-Holzwarth dataset library in the PAW-XML format \cite{jollet2014generation}. The total energies are calculated using the Monkhorst-Pack mesh \cite{monkhorst1976special} with the $k$-mesh spacing of 0.2 \AA$^{-1}$. The typical plane-wave energy cutoffs are around 500\textasciitilde600 eV. The KS equations corresponding to the updated atomic positions are solved by the preconditioned self-consistent field (SCF) iteration \cite{zhou2018applicability}.
    
    \par In addition to WANBB, we test the performances of three relaxation methods, including the conventional CG \cite{nocedal2006numerical,shewchuk1994introduction}, LBFGS \cite{liu1989limited}, and DIIS \cite{pulay1980diis}. By the way, DIIS is the so-called quasi-Newton method in some software such as VASP \cite{kresse1996efficiency, kresse1996efficient}, which can be derived from Broyden's method \cite{johnson1988modified}. The implementations of CG and DIIS follow those in \cite{nocedal2006numerical,liu1989limited}, and LBFGS is realized by the transition state tool VTST \cite{vtst}. CG employs Brent's method \cite{brent2013algorithms} as the LM algorithm. 
    
    \par The benchmark for performance test contains nearly 80 systems from various categories, including organic molecules, metallic systems, semiconductors, surface-molecule adsorption systems, and ABX$_3$ perovskites. The number of atoms ranges from 2 to 429. Some of them are available online, e.g., in Materials Project \cite{Jain2013} and Organic Materials Database \cite{borysov2017organic}. In the beginning, these systems are at their ideal crystal positions for defects, heterostructures, and substitutional alloys, or simply place a molecule on top of a surface. These are the typical initial configurations used by researchers in their simulations. More information on the benchmark can be found in Supplementary Material \footnote{See Supplemental Material at [URL will be inserted by publisher] for information on the benchmark set and detailed numerical results.}. The atomic structure relaxation is terminated if either the maximum atomic force falls below 0.01 eV/\AA~or the number of the solved KS equations (\#KS) arrives at 1000.  The convergence criterion for the SCF iteration is $10^{-5}$ eV.

	\par We evaluate the performances of the relaxation methods in two ways. One is the number of the solved KS equations (\#KS) needed to reach the 0.01 eV/\AA~force tolerance. The other is the running time (CPU) of the atomic structure relaxation. We adopt Dolan and Mor{\'e}'s performance profile \cite{dolan2002benchmarking} for an overall comparison among CG, WANBB, DIIS, and LBFGS. To this end, for any method indicated by $m\in\lrbrace{\text{CG,~WANBB,~DIIS,~LBFGS}}$, we denote its \#KS and CPU for relaxing the $n$-th system by $\text{\#KS}_{n,m},~\text{CPU}_{n,m}$, respectively \footnote{For visualization, we reset $\text{\#KS}_{n,m}=2000$ and $\text{CPU}_{n,m}=864000$ when the method $m$ fails on the $n$-th system.}, define ratios
	\begin{equation*}
		r_{n,m}^{\text{KS}}:=\frac{\text{\#KS}_{n,m}}{\min_m\text{\#KS}_{n,m}},\quad r_{n,m}^{\text{CPU}}:=\frac{\text{CPU}_{n,m}}{\min_m\text{CPU}_{n,m}},
	\end{equation*}
    and for any $\omega\ge0$, let
	\begin{align*}
		\pi_m^{\text{KS}}(\omega)&:=\frac{\text{the number of systems satisfying }r_{n,m}^{\text{KS}}\le\omega}{\text{the number of systems}},\\
		\pi_m^{\text{CPU}}(\omega)&:=\frac{\text{the number of systems satisfying }r_{n,m}^{\text{CPU}}\le\omega}{\text{the number of systems}}.
	\end{align*}
	The quantities $\pi_m^{\text{KS}}(\omega)$ and $\pi_m^{\text{CPU}}(\omega)$ estimate respectively the probabilities of $r_{n,m}^{\text{KS}}\le\omega$ and $r_{n,m}^{\text{CPU}}\le\omega$ over the benchmark \footnote{To give a fair and expressive numerical comparison, we (i) filter out the the systems where the differences between the maximum and the minimum of the converged energies per atom by the four solvers are larger than 1 meV; (ii) retain the systems at which certain solvers fail. The final number of systems used for Fig. \ref{fig:perf} is 68.}. For example, $\pi_{\text{WANBB}}^{\text{CPU}}(1)$ yields the portion of the systems where WANBB is the fastest. And $\pi_{\text{WANBB}}^{\text{CPU}}(2)$ means the estimated probability that, for one system, the running time of WANBB is not more than twice the shortest running time needed by CG, WANBB, DIIS, and LBFGS. In terms of the \#KS, $\pi_m^{\text{KS}}(\omega)$ can be interpreted in a similar way. We plot $\pi_m^{\text{KS}}(\omega)$ and $\pi_m^{\text{CPU}}(\omega)$ as a function of $\omega$ for each method in Fig. \ref{fig:perf}. Basically, the larger the area under the curve, the better the overall performance of the method. It is easy to see that the overall performance of WANBB ranks the best. More specifically, one can tell from the supplementary Tables I and II \cite{Note1} that the average CPU speedup factors of WANBB over CG, DIIS, and LBFGS are 1.51, 1.21 and 1.16, respectively.
	\begin{figure}[htbp]
		\centering
		\includegraphics[width=\columnwidth]{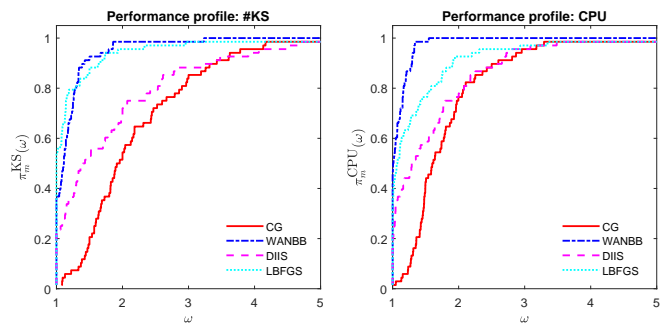}
		\caption{The \#KS (left) and CPU (right) performance profiles of CG, WANBB, DIIS, and LBFGS. }
		\label{fig:perf}
	\end{figure}

    \par As shown in Fig. \ref{fig:perf}, though sometimes requiring solving more KS equations than LBFGS, WANBB turns out to be faster than LBFGS in most tested systems. LBFGS may make larger atomic displacements due to constant stepsize, which results in underestimating the quality of initial wave functions and slowing down the SCF iteration. Later, we will explain it in more detail using the relaxation of the PuO$_2$ surface with one adsorbed H$_2$O molecule (see Fig. \ref{fig:scf}).

    \par In addition to the efficiency, the robustness is also worth the whistle. As shown in the supplementary Tables I and II \cite{Note1}, WANBB converges in all the benchmark systems, while CG fails in 1 case due to the breakdown of Brent's method, DIIS and LBFGS fail in 8 cases and 1 case, respectively, due to divergence. This kind of robustness is theoretically guaranteed by the convergence property of WANBB proved in the appendix.
    
    \par Taking the performance of CG as the baseline, we illustrate the acceleration obtained by WANBB for each system, as depicted in Fig. \ref{fig:speedup}. We have an average CPU speedup factor around 1.51, and a factor above 2.0 on about 20\% of systems. According to the supplementary Tables I and II \cite{Note1}, CG takes on average 117.68 calls of solving the KS equations to reach the force tolerance. Thus a speedup factor of 1.5\textasciitilde2 can mean a significant saving of computational overhead. This saving is mainly attributed to the enhanced \ab~LM process of WANBB since the distribution of \#KS speedup is consistent with that of CPU speedup in Fig. \ref{fig:speedup}. Moreover, we also count the percentage of the LM steps invoked by WANBB and CG for unaccepted trials and draw the frequency distribution over the benchmark in Fig. \ref{fig:frequency_ls}. The average LM step percentages for unaccepted trials are 59.09\% and 1.47\% for CG and WANBB, respectively. It is obvious from these statistics that WANBB escapes from the LM process quickly, usually without invoking LM algorithms at all.
	
	\begin{figure}
		\centering
		\includegraphics[width=\columnwidth]{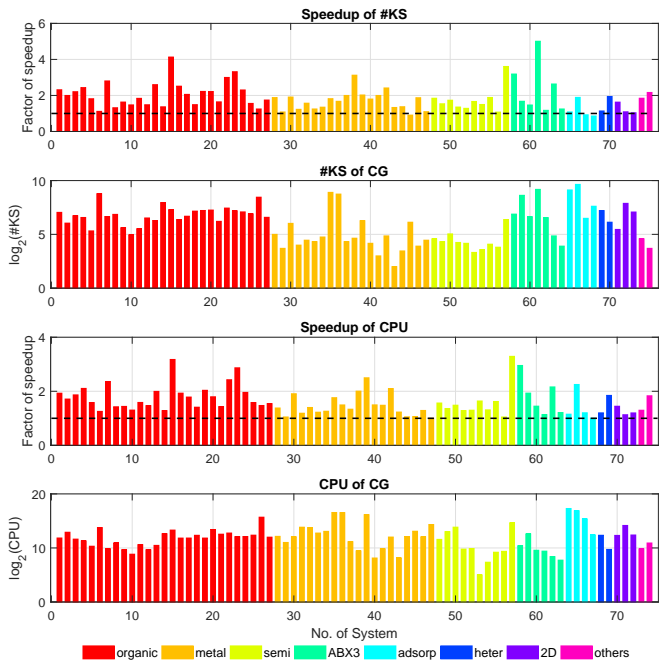}
		\caption{The speedup factors of \#KS and the \#KS of CG (the upper two), the speedup factors of CPU and the CPU of CG (the lower two). The 1.0 baselines are marked out by bold black dashed lines. The system where CG fails is not included.}
		\label{fig:speedup}
	\end{figure}

	\begin{figure}[htbp]
		\centering
		\includegraphics[width=\columnwidth]{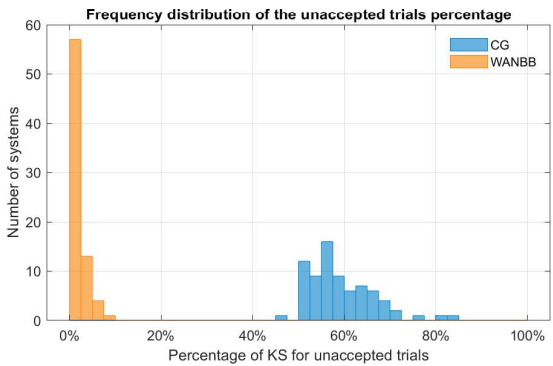}
		\caption{The frequency distributions of the LM step percentages of WANBB and CG for unaccepted trials.}
		\label{fig:frequency_ls}
	\end{figure}

	\par The following discussion is concentrated on the performance comparison between CG and WANBB for each category. In Fig. \ref{fig:speedup class size}, we plot the mean speedup factors by category and the number of atoms, respectively. All description below is supported by Tables I and II in Supplementary Material \cite{Note1}. Whenever speaking of ``speedup'' below, we are talking about the CPU speedup if not specified. 
	
	\begin{figure}[htbp]
		\centering
		\includegraphics[width=\columnwidth]{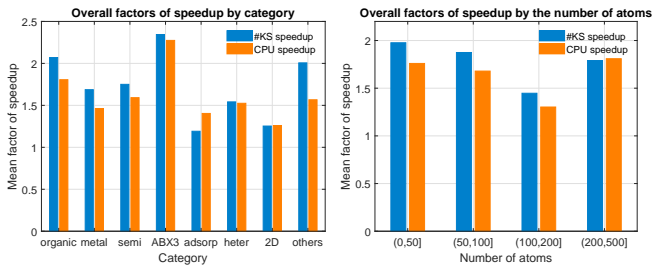}
		\caption{Mean factors of speedup by category (left) and the number of atoms (right).}
		\label{fig:speedup class size}
	\end{figure}

	\vskip 0.1cm
	
    \par \textit{Organic molecules}. This class of systems includes various alkanes and amino acids. The amino acids contain common organic atomic types, e.g., oxygen, nitrogen, sulfur, etc. The number of atoms in this class ranges from 26 to 106. Most of the initial configurations are available in Organic Materials Database \cite{borysov2017organic}. Compared with CG, WANBB achieves a mean speedup factor of 1.81 for this class and speedup factors greater than 2 on over 20\% of systems. In the case of Glutamate, the acceleration factor even reaches 3.18, which means a saving of nearly 70\% computational overhead.
	
	\vskip 0.1cm
	
	\par \textit{Metallic systems}. We consider Ag$_{16}$ cluster, Cu(111) surface with adsorbed organic molecules, and some alloys including Ni$_3$Al based intermetallic compound as well as high-entropy alloys (HEAs) such as FeCrNiCoAl. The speedup factors on these systems w.r.t. CG are mostly around 1.5\textasciitilde3. In particular, for Cu(111) surface with adsorbed organic molecules, CG needs to solve over 400 KS equations for atomic relaxations, while WANBB cuts the required \#KS roughly in half. It is also worth mentioning that WANBB can efficiently treat HEAs with potential strong local lattice distortion \cite{wu2020structural}. Actually, local lattice distortion is an essential factor in studying HEAs \cite{song2017local}, and can be described by the supercell method with \ab~atomic structure relaxation. In Fig. \ref{fig:rdf}, the local lattice distortion obtained by relaxing the body-centered cubic supercell of FeCrNiCoAl is illustrated by the smearing radial distribution function (RDF), in sharp contrast to the peaks at coordination shells in the case of the ideal lattice. On average in our benchmark, WANBB achieves a speedup of 1.54 over CG for HEAs.
	\begin{figure}[htbp]
		\centering
		\includegraphics[width=.7\linewidth]{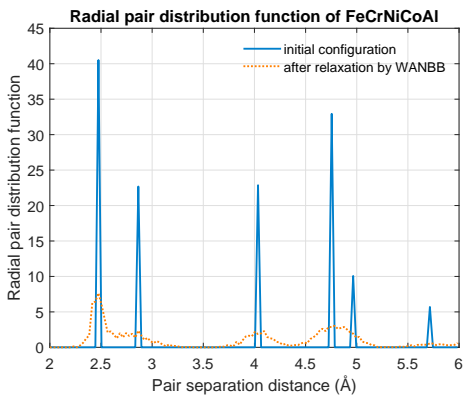}
		\caption{RDF of FeCrNiCoAl with 6 \AA~cutoff.}
		\label{fig:rdf}
	\end{figure}
	
	\vskip 0.1cm
	
	\par \textit{Semiconductors \& heterojunction systems}. We test semiconductors (CdSe, GaAs, Si, etc.) and heterojunction systems (GaAs-InAs). Moreover, we consider semiconductors with defects by constructing the bulk GaAs and Si supercells with some vacancies. In most of these systems, the acceleration factors are above 1.5, and sometimes exceed 3. We also investigate the performances of WANBB compared with CG on Si supercells with increasing system sizes; see Fig. \ref{fig:scalability}. As the system size increases, WANBB is more stable in terms of \#KS, and its CPU advantage becomes more prominent.
	\begin{figure}[htbp]
		\centering
		\includegraphics[width=\columnwidth]{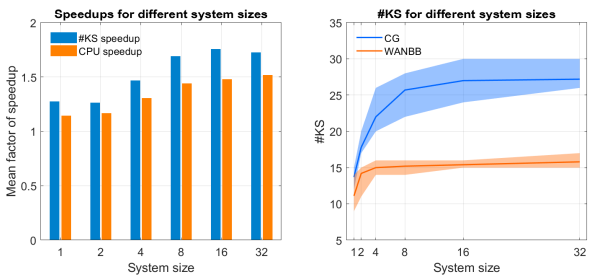}
		\caption{Left: the average \#KS and CPU speedup factors of WANBB over CG on $N\times1\times1$ Si supercells containing from 8 to 256 atoms, where each size has 10 random samples and the shift on each atom away from equilibrium is no larger than 0.1 \AA. Right: the \#KS of WANBB and CG on $N\times1\times1$ Si supercells, where the solid lines represent the average values, the upper and lower boundaries of the shaded area stand for the maximum and minimum \#KS, respectively.}
		\label{fig:scalability}
	\end{figure}
	
	\vskip 0.1cm
	
	\par \textit{ABX$_3$ perovskite systems}. This type of systems is fairly popular in recent solar cell studies. We consider the cases where ``A'' = CH$_3$NH$_3$, CH(NH$_2$)$_2$, or Cs, ``B'' = Pb or Sn, ``X'' = Br, Cl, or I. Their initial configurations come from Materials Project \cite{Jain2013}. Due to the existence of the organic molecules and their relative weak interaction with the inorganic framework, the atomic relaxations of such systems are quite challenging. Taking CH$_3$NH$_3$SnI$_3$ for instance, CG solves 581 KS equations, while WANBB entails only about one fifth of the expenditure. 
	
	\vskip 0.1cm
	
	\par \textit{Surface-molecule adsorption systems}. Several surface-molecule adsorption systems have been tested, including a 2D silicene surface with one adsorbed H$_2$O molecule (total 21 atoms), a 2D MoS$_2$ surface with one adsorbed NH$_3$ molecule (total 16 atoms), 2D PbS(001) and (111) surfaces with one adsorbed oleic acid molecule ((001) total 254 atoms, (111) total 429 atoms), and a 2D PuO$_2$ surface with one adsorbed H$_2$O molecule (total 195 atoms). The results show that CG often undergoes quite a long way until convergence in this category, and even fails in one case due to the breakdown of Brent's method. On the contrary, WANBB converges in all cases and usually costs less time; it is even twice as fast as CG on PbS(001) surface with one adsorbed oleic acid molecule. One interesting observation is that WANBB occasionally takes equal or a few more \#KS than CG, even though it is less time-costly. This is also observed when comparing WANBB and LBFGS. To explain, we plot the cumulative SCF iteration numbers of CG, WANBB, and LBFGS when relaxing the PuO$_2$ surface with one adsorbed H$_2$O molecule; see Fig. \ref{fig:scf}. 
	\begin{figure}[htbp]
		\centering
		\includegraphics[width=.8\columnwidth]{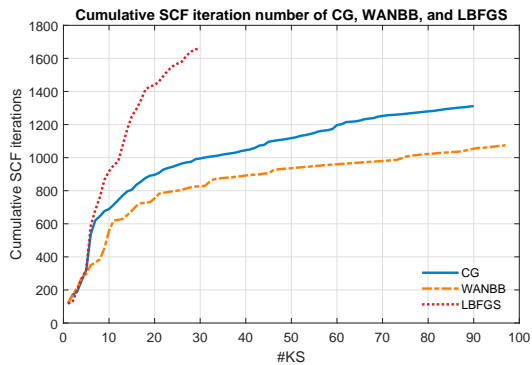}
		\caption{Cumulative SCF iteration numbers of CG, WANBB, and LBFGS on PuO$_2$ surface with one adsorbed H$_2$O molecule.}
		\label{fig:scf}
	\end{figure}
	One could observe that CG and LBFGS often consume more SCF iterations per KS equation than WANBB. In fact, CG and LBFGS make larger atomic displacements, resulting in degrading initial wave functions and slowing down the SCF iteration. On the other hand, Fig. \ref{fig:scf} also reveals that CG and LBFGS produce more efficient search directions, yielding fewer iteration numbers. This weakness, however, does not prevent WANBB from being a more favorable alternative in the context of \ab~simulations, because the cut down on the LM process obviously brings much more benefits. 
	
	\vskip 0.1cm
	
	\par \textit{2D and other systems}. The 2D materials in our benchmark include C$_{96}$, Si$_{32}$, and Germanene on hexagonal boron-nitride. The average speedup factors of WANBB over CG on these systems are 1.26. We also include systems of other type, e.g., cerium-iron oxide and potassium chloride. The former one is of practical usage for NO$_{\text{x}}$ reduction. WANBB reduces the numbers of solved KS equations by half on both of them.
	
	\vskip 0.1cm

	\par We recognize that the inefficiency of the LM process is an important obstacle to the high-performance execution of \ab~atomic structure relaxation. With the most widely used CG, nearly 60\% of the computational overhead on average is attributed to the unaccepted trials in the LM process. To this end, we propose a force-based gradient descent method called WANBB, where the initial trial (A)BB stepsizes grasp well the local curvature of PES and the unrestrictive WANLM criterion often accepts the initial trials without calling extra \ab~calculations. The robustness of WANBB, i.e., the convergence to equilibria regardless of initial configurations, is theoretically guaranteed. The numerical simulations on the benchmark containing nearly 80 various systems reveal the average 1.47\% computational cost for unaccepted trials in WANBB and demonstrate its universal speedups as well as superior robustness over CG, DIIS, and LBFGS. 
	
	\par Our work concentrates on the LM process. Nevertheless, efficient search directions together with subtly designed LM can be another great boon. This topic merits futural study. 
	
	\vskip 0.1cm 
	
	\par \textbf{Acknowledgements}
	\par The authors acknowledge Jun Fang for the help with CESSP, and thank Zhen Yang and Lifang Wang for the help on the construction of the benchmark. The computations were (partly) done on the high performance computers of State Key Laboratory of Scientific and Engineering Computing, Chinese Academy of Sciences. This work was partially supported by the National Natural Science Foundation of China under Grant Nos. 12125108, 11971466, 11991021, 11991020, 12021001, and 12288201, Key Research Program of Frontier Sciences, Chinese Academy of Sciences (No. ZDBS-LY-7022), the CAS AMSS-PolyU Joint Laboratory in Applied Mathematics, and the Science Challenge Project under Grant No. TZ2018002.
	
	\appendix*
	
	\section{Convergence Analysis of WANBB}\label{appsec:convergence}
	
	\par In this part, we analyze the convergence property of WANBB. Before moving on, we define the level set
	$$\mathcal{L}:=\lrbrace{R:E(R)\le E_0}.$$ 
	For the ease of reference, we restate the nonmonotone conditions in WANLM as follows: in the $(k+1)$-th iteration, $R_{k+1}$ satisfies
	\begin{align}
		E_{k+1}&\le B_k-c\cdot\alpha_k\snorm{F_k}_\F^2,\label{eqn:sufficient decrease new}\\
		\inner{F_{k+1},F_k}&\le\sigma\snorm{F_k}_\F^2.\label{eqn:curvature new}
	\end{align}
	By the arguments in \cite[Lemma 1.1]{zhang2004nonmonotone}, we know that the stepsize $\alpha_k$ satisfying both Eqs.~\eqref{eqn:sufficient decrease new} and \eqref{eqn:curvature new} always exists and that $B_k$ is a convex combination of $\lrbrace{E_i}_{i=0}^k$. 
	
	\par We first show that the atomic position sequence generated by WANBB remains in $\mathcal{L}$. 
	
	\begin{lem}\label{lem:level set}
		Let $\lrbrace{R_k}$ be the atomic position sequence generated by \textup{WANBB}~with $\mu_k\in[\mu_{\min},\mu_{\max}]\subseteq[0,1]$. Then we have $\lrbrace{R_k}\subseteq\mathcal{L}$. 
	\end{lem}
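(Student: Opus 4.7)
The plan is to proceed by a straightforward induction on the iteration index $k$, exploiting the two facts already laid out immediately before the lemma: first, the sufficient-decrease condition Eq.~\eqref{eqn:sufficient decrease new} implies $E_{k+1}\le B_k$ since $c>0$, $\alpha_k>0$, and $\snorm{F_k}_\F^2\ge 0$; second, $B_k$ is a convex combination of $\lrbrace{E_i}_{i=0}^k$, so in particular $B_k\le\max_{0\le i\le k}E_i$.

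The induction hypothesis I would carry is the stronger statement $E_i\le E_0$ for every $0\le i\le k$ (which is exactly $R_i\in\mathcal{L}$). The base case $k=0$ is immediate from $B_0:=E_0$. For the inductive step, the hypothesis gives $\max_{0\le i\le k}E_i\le E_0$; combining this with the convex-combination property yields $B_k\le E_0$, and then the sufficient-decrease inequality Eq.~\eqref{eqn:sufficient decrease new} delivers $E_{k+1}\le B_k\le E_0$, which is $R_{k+1}\in\mathcal{L}$ and closes the induction.

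There is essentially no hard step here; the lemma is a bookkeeping consequence of the construction. The only mild subtlety is making sure the two ingredients invoked from the preamble to the lemma are applied in the right order: one must first use the inductive hypothesis to bound $B_k$ from above by $E_0$ (via convex combination), and only then combine with Eq.~\eqref{eqn:sufficient decrease new} to control $E_{k+1}$. I would state this explicitly to emphasise that monotone decrease of $\lrbrace{E_k}$ is \emph{not} being asserted—only that the entire sequence is trapped beneath the initial energy, which is exactly what is needed to keep subsequent analysis (lower boundedness of $\lrbrace{B_k\}}$, uniform bounds on forces, etc.) on the level set $\mathcal{L}$.
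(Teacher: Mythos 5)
Your proof is correct. It rests on exactly the two ingredients the paper itself isolates before the lemma---namely $E_{k+1}\le B_k$ (from Eq.~\eqref{eqn:sufficient decrease new} with $c>0$ and $\alpha_k>0$) and the fact that $B_k$ is a convex combination of $\lrbrace{E_i}_{i=0}^k$, hence bounded above by $\max_{0\le i\le k}E_i$---but you organize them as a direct induction carrying the hypothesis $E_i\le E_0$ for all $i\le k$, whereas the paper argues by contradiction: assuming some $E_{l_1}>E_0$, it uses the convex-combination property to extract an earlier index $l_2$ with $E_{l_1}\le E_{l_2}$, iterates to build a strictly decreasing chain of indices terminating at $0$, and derives $E_0<E_{l_1}\le\cdots\le E_0$. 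The two arguments are logically equivalent, but your induction is arguably the tidier packaging: it avoids the chain-extraction and termination step entirely, and it makes explicit the invariant ($B_k\le E_0$) that the contradiction argument only uses implicitly. Your closing remark that monotone decrease of $\lrbrace{E_k}$ is \emph{not} being claimed is also a worthwhile clarification that the paper leaves unsaid.
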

	
	\begin{proof}
		We prove by contradiction. Suppose there exists $l_1>0$ such that $E_{l_1}>E_0$. By the condition \eqref{eqn:sufficient decrease new}, 
		$$E_{l_1}\le B_{l_1-1}-c\cdot\alpha_{l_1-1}\snorm{F_{l_1-1}}_\F^2\le B_{l_1-1}.$$
		Since $B_{l_1-1}$ is a convex combination of $\lrbrace{E_0,\ldots,E_{l_1-1}}$, there must exist some $l_2\le l_1-1$ for which $E_{l_1}\le E_{l_2}$ holds. We can repeat the above arguments to obtain a nonnegative sequence $\lrbrace{l_i}$ which is finite for its strict monotonicity and ends up with 0. Therefore, we get $E_0<E_{l_1}\le E_{l_2}\le\cdots\le E_0$, a contradiction.
	\end{proof}
	
	\par From the update formula \eqref{eqn:new surrogate sequence}, one could obtain that the surrogate sequence $\{B_k\}$ is always lower bounded by the potential energy sequence $\{E_k\}$. 
	
	\begin{lem}\label{lem:E le B}
		Let $\{R_k\}$ be the atomic position sequence generated by WANBB with $\mu_k\in[\mu_{\min},\mu_{\max}]\subseteq[0,1]$. Then we have $E_k\le B_k$ for any $k\ge0$.
	\end{lem}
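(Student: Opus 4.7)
The plan is to prove the bound $E_k \le B_k$ by induction on $k$, using the sufficient decrease condition \eqref{eqn:sufficient decrease new} together with the convex-combination structure of the update rule \eqref{eqn:new surrogate sequence}.

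For the base case $k=0$, the initialization $B_0 := E_0$ gives the claim with equality. For the inductive step, assume $E_k \le B_k$. First I would apply the sufficient decrease condition \eqref{eqn:sufficient decrease new} to get
$$E_{k+1} \le B_k - c\cdot\alpha_k\snorm{F_k}_\F^2 \le B_k,$$
since $c,\alpha_k>0$ and $\snorm{F_k}_\F^2\ge 0$. Note that this step does not even use the inductive hypothesis; it comes directly from the WANLM acceptance rule.

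Next I would rewrite the update \eqref{eqn:new surrogate sequence} in the manifestly convex-combination form
$$B_{k+1} = \frac{1}{1+\mu_k P_k}\,B_k + \frac{\mu_k P_k}{1+\mu_k P_k}\,E_{k+1},$$
where both coefficients are nonnegative and sum to one (this is where $\mu_k \in [0,1]$ and $P_k > 0$, which is clear from the recursion for $P_k$ with $P_0 = 1$, enter). Substituting the bound $E_{k+1} \le B_k$ from the previous step into the $B_k$ slot gives
$$B_{k+1} \ge \frac{1}{1+\mu_k P_k}\,E_{k+1} + \frac{\mu_k P_k}{1+\mu_k P_k}\,E_{k+1} = E_{k+1},$$
which closes the induction.

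There is no real obstacle here: the only subtlety is recognizing that the WANLM criterion already supplies $E_{k+1}\le B_k$ unconditionally (no induction hypothesis needed), after which the convex-combination identity for $B_{k+1}$ delivers the result in a single line. The argument also makes transparent why the bound is tight: equality $E_k = B_k$ would require $\alpha_k\snorm{F_k}_\F^2 = 0$ and the combination to place all its weight on $E_{k+1}$, mirroring the trivial base case.
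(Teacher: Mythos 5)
Your proof is correct and follows essentially the same route as the paper: both arguments rest on the observation that the acceptance condition \eqref{eqn:sufficient decrease new} gives $E_{k+1}\le B_k$ directly, after which the convex-combination form of the update \eqref{eqn:new surrogate sequence} yields $B_{k+1}\ge E_{k+1}$. Your remark that the induction hypothesis is never actually used matches the paper's presentation, which dispenses with induction entirely.
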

	
	\begin{proof}
		The claim is true for $k=0$ by definition. For $k\ge1$, by the update formula \eqref{eqn:new surrogate sequence} of $B_{k+1}$, one has
		$$B_{k}=\frac{B_{k-1}+\mu_{k-1}Q_{k-1}E_{k}}{1+\mu_{k-1}Q_{k-1}}\ge\frac{E_k+\mu_{k-1}Q_{k-1}E_k}{1+\mu_{k-1}Q_{k-1}}=E_k,$$
		where the inequality follows from the condition \eqref{eqn:sufficient decrease new}. The proof is complete.
	\end{proof}
	
	\par With Lemmas \ref{lem:level set} and \ref{lem:E le B} in place, we are ready to show the convergence of WANBB. The proof relies on the conditions \eqref{eqn:sufficient decrease new} and \eqref{eqn:curvature new} as well as the update formula \eqref{eqn:new surrogate sequence}. Specifically, the condition \eqref{eqn:curvature new} ensures a uniform positive lower bound of stepsizes, while Eqs.~\eqref{eqn:sufficient decrease new} and \eqref{eqn:new surrogate sequence} guide us to a sufficient reduction over $\{B_k\}$.
	
	\begin{thm}\label{thm:WANBB global convergence}
		Suppose the potential energy is bounded from below and $F$ is Lipschitz continuous, with modulus $L>0$, on $\mathcal{L}$, namely,
		$$\snorm{F(R)-F(\bar R)}_\F\le L\snorm{R-\bar R}_\F,~\forall~R,\bar R\in\mathcal{L}.$$
		Let $\lrbrace{R_k}$ be the atomic position sequence generated by \textup{WANBB}~with $\mu_k\in[\mu_{\min},\mu_{\max}]\subseteq(0,1]$. Then we have 
		$$\lim_{k\to\infty}\snorm{F_k}_\F=0.$$
		That is to say, the atoms~will approach equilibria, regardless of initial configurations. 
	\end{thm}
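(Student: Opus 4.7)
The plan is to establish $\lim_{k\to\infty}\snorm{F_k}_\F=0$ in three stages: (i) secure a uniform positive lower bound on the accepted stepsize $\alpha_k$; (ii) extract a sufficient decrease of the monitoring sequence $\lrbrace{B_k}$ of order $\snorm{F_k}_\F^2$; and (iii) combine monotonicity and lower boundedness of $\lrbrace{B_k}$ to conclude.

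For (i), I would rewrite the curvature condition~\eqref{eqn:curvature new} as $\inner{F_{k+1}-F_k,F_k}\le-(1-\sigma)\snorm{F_k}_\F^2$. Cauchy--Schwarz together with the Lipschitz estimate $\snorm{F_{k+1}-F_k}_\F\le L\alpha_k\snorm{F_k}_\F$---which is available because Lemma~\ref{lem:level set} keeps both $R_k$ and $R_{k+1}=R_k+\alpha_kF_k$ in $\mathcal{L}$---then forces $\alpha_k\ge(1-\sigma)/L$.

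For (ii), substituting the sufficient decrease~\eqref{eqn:sufficient decrease new} into the update rule~\eqref{eqn:new surrogate sequence} gives
$$B_{k+1}\le B_k-\frac{\mu_kP_k}{1+\mu_kP_k}\,c\,\alpha_k\,\snorm{F_k}_\F^2.$$
A one-line induction from $P_0=1$ shows $P_k\ge1$, and combined with $\mu_k\ge\mu_{\min}>0$ this produces $\mu_kP_k/(1+\mu_kP_k)\ge\mu_{\min}/(1+\mu_{\min})$. Together with stage (i), one obtains $B_k-B_{k+1}\ge\kappa\snorm{F_k}_\F^2$ with $\kappa:=\mu_{\min}\,c\,(1-\sigma)/[L(1+\mu_{\min})]>0$ independent of $k$.

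Stage (iii) is then straightforward: $\lrbrace{B_k}$ is monotonically decreasing, and, as already noted in the preamble to Lemma~\ref{lem:level set}, each $B_k$ is a convex combination of $\lrbrace{E_i}_{i=0}^{k}$, so the lower boundedness of the potential energy transfers to $\lrbrace{B_k}$. Hence $\lrbrace{B_k}$ converges, $B_k-B_{k+1}\to0$, and therefore $\snorm{F_k}_\F\to0$. The main obstacle I anticipate is stage (i): the curvature condition is precisely the ingredient that rules out vanishingly small steps, but extracting a quantitative bound requires carefully pairing it with Lipschitz continuity of $F$ on $\mathcal{L}$; the fact that $\mathcal{L}$ is an invariant set, guaranteed by Lemma~\ref{lem:level set}, is what lets us apply the global Lipschitz hypothesis. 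Once $\mu_{\min}>0$ is exploited and Lemmas~\ref{lem:level set}--\ref{lem:E le B} are in hand, the remaining pieces are routine bookkeeping.
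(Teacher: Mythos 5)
Your proposal is correct and follows essentially the same route as the paper's proof: the curvature condition \eqref{eqn:curvature new} plus Lipschitz continuity on the invariant level set $\mathcal{L}$ yields the stepsize bound $\alpha_k\ge(1-\sigma)/L$, which is fed through \eqref{eqn:sufficient decrease new} and the update \eqref{eqn:new surrogate sequence} to give a per-iteration decrease of $\lrbrace{B_k}$ proportional to $\snorm{F_k}_\F^2$, and lower boundedness finishes the argument. The only (immaterial) differences are that you conclude from $B_k-B_{k+1}\to0$ rather than from summability of $\snorm{F_k}_\F^2$, and your constant $\mu_{\min}/(1+\mu_{\min})$ is a slightly sharper bound on $\mu_kP_k/(1+\mu_kP_k)$ than the paper's $\mu_{\min}/(\mu_{\max}+1)$.
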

	
	\begin{proof}
		Substracting $\snorm{F_k}_\F^2$ from both side of Eq.~\eqref{eqn:curvature new}, we have
		$$\inner{F_k-F_{k+1},F_k}\ge\lrbracket{1-\sigma}\snorm{F_k}_\F^2.$$
		Combining Lemma \ref{lem:level set}, the Lipschitz continuity of $F$ in $\mathcal{L}$, and the fact $\sigma\in(0,1)$, we can further derive 
		\begin{equation*}
			\begin{aligned}
				(1-\sigma)\snorm{F_k}_\F^2&\le\snorm{F_k-F_{k+1}}_\F\cdot\snorm{F_k}_\F\\
				&\le L\snorm{R_k-R_{k+1}}\cdot\snorm{F_k}\\
				&=\alpha_kL\snorm{F_k}_\F^2,
			\end{aligned}
		\end{equation*}
		and hence a lower bound for stepsize whenever $F_k$ does not vanish:
		\begin{equation}
			\alpha_k\ge\frac{1-\sigma}{L}.
			\label{eqn:lower bd for step}
		\end{equation}
		Plugging Eq.~\eqref{eqn:lower bd for step} into condition \eqref{eqn:sufficient decrease new}, we obtain
		\begin{equation}
			E_{k+1}\le B_k-c\cdot\alpha_k\snorm{F_k}_\F^2\le B_k-\frac{c(1-\sigma)}{L}\snorm{F_k}_\F^2.
			\label{eqn:recursive of energy}
		\end{equation}
		Let $\beta:=c(1-\sigma)/L$. By the update formula \eqref{eqn:new surrogate sequence} of $B_{k+1}$, we obtain a recursion for the surrogate sequence using Eq.~\eqref{eqn:recursive of energy}:
		\begin{equation}
			\begin{aligned}
				B_{k+1}&=\frac{B_k+\mu_kQ_kE_{k+1}}{1+\mu_kQ_k}\\
				&\le\frac{B_k+\mu_kQ_k\lrbracket{B_k-\beta\snorm{F_k}_\F^2}}{Q_{k+1}}\\
				&=B_k-\beta\frac{\mu_k Q_k}{Q_{k+1}}\snorm{F_k}_\F^2.
			\end{aligned}
			\label{eqn:recursive of surrogate}
		\end{equation}
		A byproduct of Eq.~\eqref{eqn:recursive of surrogate} is the monotonicity of $\lrbrace{B_k}$. Since $E$ is bounded from below on $\mathcal{L}$ and $B_k\ge E_k$ for any $k\ge0$ (by Lemma \ref{lem:E le B}), $\lrbrace{B_k}$ is lower bounded as well. Consequently, by Eq.~\eqref{eqn:recursive of surrogate}, 
		\begin{equation}
			\sum_{k=0}^\infty\frac{\mu_kQ_k\snorm{F_k}_\F^2}{Q_{k+1}}\le\frac{1}{\beta}\sum_{k=0}^\infty\lrbracket{B_k-B_{k+1}}<+\infty.
			\label{eqn:summation}
		\end{equation}
		Following from the update formula of $\lrbrace{Q_k}$ and the fact $\mu_{\max}\ge\mu_k\ge\mu_{\min}>0$, $Q_k\ge1$, we can derive
		$$\frac{\mu_kQ_k}{Q_{k+1}}\ge\frac{\mu_{\min}Q_k}{\mu_kQ_k+1}\ge\frac{\mu_{\min}}{\mu_k+1/Q_k}\ge\frac{\mu_{\min}}{\mu_{\max}+1},$$
		which, together with Eq.~\eqref{eqn:summation}, yields the summability of $\snorm{F_k}$. Hence, $\snorm{F_k}\to0$ as $k\to\infty$ as desired. 
	\end{proof}
	
	\normalem
	%

\end{document}